\begin{document}

\title{Statistical Performance Analysis of MDL Source Enumeration
in Array Processing}
\author{F.~Haddadi*,
        M.~Malek Mohammadi,
        M.~M.~Nayebi,
        and~M.~R.~Aref%
\thanks{Authors are with the Department of Electrical Engineering,
Sharif University of Technology, Tehran, Iran (e-mails:
farzanhaddadi@yahoo.com, m.rezamm@ieee.org, Nayebi@sharif.ir, and
Aref@sharif.ir). This work was supported in part by the Advanced
Communication Research Institute (ACRI), Sharif University of
Technology, and in part by Iran Telecommunication Research Center
(ITRC) under counteract number T/500/20613.}}

\maketitle

\begin{abstract}
In this correspondence, we focus on the performance analysis of
the widely-used minimum description length (MDL) source
enumeration technique in array processing. Unfortunately,
available theoretical analysis exhibit deviation from the
simulation results. We present an accurate and insightful
performance analysis for the probability of missed detection. We
also show that the statistical performance of the MDL is
approximately the same under both deterministic and stochastic
signal models. Simulation results show the superiority of the
proposed analysis over available results.
\end{abstract}

\begin{keywords}
Minimum description length (MDL), source enumeration, performance
analysis, deterministic signal.
\end{keywords}

\begin{center}
\bfseries EDICS Category: SAM-PERF, SAM-SDET
\end{center}

\IEEEpeerreviewmaketitle

\newtheorem{lemma}{Lemma}
\renewcommand{\QED}{\QEDopen}

\section{Introduction and Preliminaries}

MDL \cite{wax}, is one of the most successful methods for
determining the number of present signals in array processing and channel order detection \cite{liavas-ch}. MDL is a low complexity information theoretic criteria which does not need any subjective threshold setting usual in detection theoretic criteria. Other statistical properties, specially its asymptotic
consistency \cite{wax}, makes it a favorable choice for source
enumeration. Unfortunately, only few approximate finite-sample
performance analysis are available on the MDL method \cite{wang,
kavehCOR, zhang, xu-kaveh, fishler, liavas-perf}. In \cite{wang}, a
simple asymptotic statistical model for the eigenvalues of the
sample correlation matrix was used. Unfortunately, the theoretical
results showed persistent bias from the simulation results
\cite{kavehCOR}.

The next work \cite{zhang}, gives a computational approach for
calculation of the probability of false alarm $p_{fa}$. In
calculating the probability of missed detection $p_m$, the same
inaccurate statistical model is used as in \cite{wang}. In
\cite{xu-kaveh}, instead of exact performance estimation,
theoretical bounds for performance were presented. A qualitative
performance evaluation in terms of gap between noise and signal
eigenvalues and also the dispersion of each group is given in
\cite{liavas-perf}. In a recent work \cite{fishler}, a significantly
different approach was used. Our simulation results show improved
results of \cite{fishler} in comparison with \cite{wang}. The
performance analysis was generalized to the non-Gaussian signals
while it was shown that the results reduce to the results of
\cite{zhang, xu-kaveh} in Gaussian signals. We will show that the
same modelling errors have degraded the analysis in \cite{fishler}
as in \cite{wang, kavehCOR, zhang, xu-kaveh}.

In this correspondence, we use an approach very similar to
\cite{wang, kavehCOR, zhang} to estimate $p_m$, including in the
analysis the finite sample $\mathcal{O} (n^{-1})$ biases of the
eigenvalues. The noise subspace eigenvalue spread is taken into
account which prevents the signal subspace eigenvalues to approach
$\sigma^2$, the noise variance. The bias of the noise power
estimator in MDL is calculated to get excellent match between
theoretical and simulation results. We will not calculate $p_{fa}$
which is negligible.

In the previous works, only the case of stochastic signal has been
considered. Here, we use a perturbation analysis to calculate
biases and variances of the eigenvalues under deterministic
signal, too. Using these results, we show that the performance of
source enumeration methods are approximately the same in both
stochastic and deterministic signal models. This is a natural
complementary result for the known fact that the performance of
the DOA (Direction of Arrival) estimation methods in array
processing is the same under stochastic and deterministic signal
models \cite{ottersten}.

From a sensor array of $L$ elements, $n$ observations $\boldsymbol
x_i \in \mathbb{C}^{L \times 1} ,  i=1,\ldots,n$ is made, which is
a linear transformation of $d<L$ source signals $\boldsymbol s_i
\in \mathbb{C}^{\, d \times 1}$, plus noise $\boldsymbol \nu_i \in
\mathbb{C}^{L \! \times 1}$
\begin{equation}
\label{eq:model} \boldsymbol x_i=\boldsymbol A(\boldsymbol \theta)
\boldsymbol s_i + \boldsymbol \nu_i
\end{equation}
where $\boldsymbol A \in \mathbb{C}^{L \! \times d}$, the steering
matrix, is composed of $d$ linearly independent column vectors of
array response $\boldsymbol a(\theta_k) , k=1,\ldots,d$. Let
$\boldsymbol X \triangleq [\boldsymbol x_1 , \ldots , \boldsymbol
x_n]$ and $\boldsymbol S$ and $\boldsymbol V$ be defined in the
same way. Signal and noise are assumed to be iid and uncorrelated
random variables. A compact form for the model will be
\begin{equation}
\label{eq:model_matrix} \boldsymbol X=\boldsymbol A(\boldsymbol
\theta) \boldsymbol S + \boldsymbol V.
\end{equation}
Noise is assumed to be circular Gaussian. Signal can be modelled
either as a zero-mean circular Gaussian random sequence or an
unknown deterministic sequence. The distribution of $\boldsymbol
x$ will be as $\mathcal{N}(\boldsymbol 0, \boldsymbol {APA}^
\textrm{H} + \sigma^2 \boldsymbol I)$ where $\boldsymbol P = E
(\boldsymbol {ss}^ \textrm{H})$ in the \emph{stochastic} signal
model, and as $\mathcal{N}(\boldsymbol {As}\, , \sigma^2
\boldsymbol I)$ in the \emph{deterministic} signal model.

To estimate the number of present signals $d$, eigenvalues of the
correlation matrix $\boldsymbol R = n^{-1} E (\boldsymbol
{XX}^\textrm{H})$ are used. Note that $\boldsymbol R_{det} =
n^{-1} \boldsymbol {ASS\,}^\textrm{H} \! \boldsymbol A^\textrm{H}
+ \sigma^2 \boldsymbol I$ and $\boldsymbol R_{sto} = \boldsymbol
{APA}^\textrm{H} + \sigma^2 \boldsymbol I$. The eigendecomposition
of the correlation matrix is
\begin{equation}
\label{eq:evR} \boldsymbol {Rv}_i = \lambda_i \boldsymbol v_i
\end{equation}
and we have $\lambda_1 > \cdots > \lambda_d
> \lambda_{d+1}= \cdots =\lambda_{L}= \sigma^2$.
Source enumeration methods are based on a spherity test on the
sample correlation matrix defined as
\begin{equation}
\label{eq:sample_R}  \hat {\boldsymbol R} = \frac{1}{n}
\sum_{i\,=1}^{n} \boldsymbol x_i \boldsymbol x_i^ \textrm{H}.
\end{equation}
Eigendecomposition of $\hat {\boldsymbol R}$ is defined as
$\boldsymbol {\hat {R}w}_i = l_i \boldsymbol w_i$ in which $l_1 >
l_2 > \cdots > l_L$. The MDL estimator of $d$ is the minimizer of
the following criterion
\begin{equation}
\label{eq:MDL} \Lambda(d,L,n) = n(L-d \, ) \log \bigg(
\frac{a_d}{g_d} \bigg) +  \frac{1}{2} \, d(2L-d) \log(n)
\end{equation}
where
\begin{equation}
\label{eq:a_d}   a_d  \triangleq  \frac{1}{L-d} \sum_{i=d+1}^{L}
l_i
\end{equation}
\begin{equation}
\label{eq:g_d} g_d  \triangleq  \prod_{i=d+1}^{L} l_i^{1/(L-d)}
\end{equation}
The first term in \eqref{eq:MDL} is the generalized likelihood
ratio for the test of spherity and the second term is a penalty
function preventing over-modelling.

\section{Statistical Properties of Eigenvalues}
\label{sec:stat}

\subsection{Signal Eigenvalues}
\label{subsec:stat_signal}

First of all, we derive a result useful for statistical
characterization of the signal eigenvalues in the deterministic
signal model. Let $\boldsymbol x_i \in \mathbb{C}^{\, L \! \times
1} \, , \, i=1,\ldots,n$ be i.i.d. observations and $\boldsymbol
x_i \sim \mathcal{N} ( \boldsymbol 0 , \boldsymbol \Sigma \, )$.
Note that $\textrm{vec} \, (\boldsymbol X) \sim
\mathcal{N}(\boldsymbol 0 \, , \boldsymbol I_n \otimes \boldsymbol
\Sigma \, )$, where $\otimes$ is the Kronecker product and
$\textrm{vec} (\boldsymbol X)$ is the vectorizing operator
stacking columns of $\boldsymbol x$ in a single column vector. Let
$\boldsymbol \alpha ,\boldsymbol \beta , \boldsymbol \gamma ,
\boldsymbol \zeta \in \mathbb{C}^{L \! \times 1}$ be constant
vectors. The Brillinger result states that \cite[p.
114]{brillinger}:
\begin{equation}
\label{eq:brill} \textrm{Cov} (\boldsymbol \alpha ^ \textrm{H}
\hat {\boldsymbol R}\, \boldsymbol \beta \, , \boldsymbol \gamma ^
\textrm{H} \hat {\boldsymbol R} \, \boldsymbol \zeta ) =
n^{-1}(\boldsymbol \alpha ^ \textrm{H} \boldsymbol \Sigma \,
\boldsymbol \gamma )(  \boldsymbol \zeta^ \textrm{H} \boldsymbol
\Sigma \, \boldsymbol \beta).
\end{equation}
We generalize the Brillinger result to the nonzero-mean case. To
the best of our knowledge the following result is new to the
literature.

\begin{lemma}
\label{lem:Bril_nc} Let $\textrm{vec} (\boldsymbol Y) \sim
\mathcal{N}(\textrm{vec} (\boldsymbol \mu) \, , \boldsymbol I_n
\otimes \boldsymbol \Sigma \,)$, where $\boldsymbol \mu \triangleq
[\, \boldsymbol \mu_1 ,\ldots, \boldsymbol \mu_n]$ and
$\boldsymbol Y \triangleq [\, \boldsymbol y_1 ,\ldots, \boldsymbol
y_n]$. Then for $\hat {\boldsymbol R} = n^{-1} \boldsymbol {YY}^
\textrm{H}$ and constant vectors $\boldsymbol \alpha ,\boldsymbol
\beta , \boldsymbol \gamma , \boldsymbol \zeta \in \mathbb{C}^{L
\! \times 1}$, we will have
\begin{eqnarray}
\label{eq:bril_nc} c \triangleq \textrm{Cov} (\boldsymbol \alpha ^
\textrm{H} \hat {\boldsymbol R}\, \boldsymbol \beta \, ,
\boldsymbol \gamma ^ \textrm{H} \hat {\boldsymbol R}\, \boldsymbol
\zeta ) = n^{-1}(\boldsymbol \alpha ^ \textrm{H} \boldsymbol
\Sigma \, \boldsymbol \gamma )(  \boldsymbol
\zeta^ \textrm{H} \boldsymbol \Sigma \, \boldsymbol \beta) \nonumber \\
{+} \: n^{-2}(\boldsymbol \alpha ^ \textrm{H} \boldsymbol
{\mu\mu}^ \textrm{H} \, \boldsymbol \gamma )( \boldsymbol \zeta^
\textrm{H} \boldsymbol \Sigma \, \boldsymbol \beta) \nonumber \\ +
\: n^{-2}(\boldsymbol \alpha ^ \textrm{H} \boldsymbol \Sigma \,
\boldsymbol \gamma )(  \boldsymbol \zeta^ \textrm{H} \boldsymbol
{\mu\mu}^ \textrm{H} \, \boldsymbol \beta)
\end{eqnarray}
\end{lemma}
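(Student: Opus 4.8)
The plan is to reduce the computation to the zero-mean Brillinger identity \eqref{eq:brill} by splitting each observation into its mean and a centered circular-Gaussian fluctuation. Because the covariance $\boldsymbol I_n \otimes \boldsymbol \Sigma$ is block diagonal, the columns of $\boldsymbol Y$ are independent with $\boldsymbol y_i \sim \mathcal{N}(\boldsymbol \mu_i, \boldsymbol \Sigma)$, so I write $\boldsymbol y_i = \boldsymbol \mu_i + \boldsymbol z_i$ with $\boldsymbol z_i \sim \mathcal{N}(\boldsymbol 0, \boldsymbol \Sigma)$ i.i.d. Since $\hat{\boldsymbol R}=n^{-1}\sum_i \boldsymbol y_i \boldsymbol y_i^{\textrm{H}}$, each quadratic form is a sum of $n$ independent rank-one contributions,
\[
\boldsymbol \alpha^{\textrm{H}}\hat{\boldsymbol R}\boldsymbol \beta = n^{-1}\sum_{i=1}^n (\boldsymbol \alpha^{\textrm{H}}\boldsymbol y_i)(\boldsymbol y_i^{\textrm{H}}\boldsymbol \beta),
\]
and likewise for $\boldsymbol \gamma^{\textrm{H}}\hat{\boldsymbol R}\boldsymbol \zeta$. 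Independence across $i$ annihilates every cross-index covariance, leaving $c = n^{-2}\sum_i \textrm{Cov}\bigl((\boldsymbol \alpha^{\textrm{H}}\boldsymbol y_i)(\boldsymbol y_i^{\textrm{H}}\boldsymbol \beta),(\boldsymbol \gamma^{\textrm{H}}\boldsymbol y_i)(\boldsymbol y_i^{\textrm{H}}\boldsymbol \zeta)\bigr)$, so it suffices to evaluate one single-sample covariance and then sum.

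For a fixed $i$ I would substitute $\boldsymbol y_i = \boldsymbol \mu_i + \boldsymbol z_i$ and expand each quadratic form into four scalar pieces (mean–mean, mean–noise, noise–mean, noise–noise). Using the complex covariance convention $\textrm{Cov}(U,V)=E[U\bar V]-E[U]E[\bar V]$ under which \eqref{eq:brill} holds, I would compute $E[U\bar V]$ via the circular-Gaussian moment (Wick/Isserlis) rules. Every term with an odd total number of $\boldsymbol z_i$-factors vanishes, and among the surviving second-order cross terms the circularity of $\boldsymbol z_i$ forces all ``holomorphic–holomorphic'' and ``antiholomorphic–antiholomorphic'' pairings to be zero; for instance $E[(\boldsymbol \alpha^{\textrm{H}}\boldsymbol z_i)(\boldsymbol \zeta^{\textrm{H}}\boldsymbol z_i)]=0$ because $E[\boldsymbol z_i \boldsymbol z_i^{\textrm{T}}]=\boldsymbol 0$. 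Only the genuinely mixed pairings persist, each producing a factor of the form $\boldsymbol \cdot^{\textrm{H}}\boldsymbol \Sigma\,\boldsymbol \cdot$.

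The purely-noise fourth-order term is exactly the quantity treated by the zero-mean Brillinger result, so I would invoke \eqref{eq:brill} for it (equivalently, its two-pairing Wick expansion) rather than recompute it. After subtracting $E[U]E[\bar V]$, the mean-only term together with the mean–mean$\times\boldsymbol \Sigma$ and pure $\boldsymbol \Sigma$–$\boldsymbol \Sigma$ pieces cancel, and the single-sample covariance collapses to three contributions: the Brillinger term $(\boldsymbol \alpha^{\textrm{H}}\boldsymbol \Sigma\boldsymbol \gamma)(\boldsymbol \zeta^{\textrm{H}}\boldsymbol \Sigma\boldsymbol \beta)$ plus the two mixed mean–noise terms $(\boldsymbol \alpha^{\textrm{H}}\boldsymbol \mu_i\boldsymbol \mu_i^{\textrm{H}}\boldsymbol \gamma)(\boldsymbol \zeta^{\textrm{H}}\boldsymbol \Sigma\boldsymbol \beta)$ and $(\boldsymbol \alpha^{\textrm{H}}\boldsymbol \Sigma\boldsymbol \gamma)(\boldsymbol \zeta^{\textrm{H}}\boldsymbol \mu_i\boldsymbol \mu_i^{\textrm{H}}\boldsymbol \beta)$. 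Summing over $i$ and using $\sum_i \boldsymbol \mu_i\boldsymbol \mu_i^{\textrm{H}}=\boldsymbol{\mu\mu}^{\textrm{H}}$, the Brillinger term absorbs a factor $n$ so that its $n^{-2}$ becomes $n^{-1}$, while the two mixed terms retain their $n^{-2}$ scaling, yielding \eqref{eq:bril_nc}.

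The main obstacle I anticipate is bookkeeping rather than any single hard estimate: correctly propagating the complex conjugates through the fourfold expansion and being disciplined about exactly which second-order cross terms vanish by circularity. The one genuinely substantive ingredient is the fourth-order complex-Gaussian moment, but since it coincides with the established zero-mean case it can be cited rather than redone, which keeps the argument short and transparent.
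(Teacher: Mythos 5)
Your proposal is correct and follows essentially the same route as the paper's proof: center the data via $\boldsymbol Y = \boldsymbol \mu + \boldsymbol X$, expand the quadratic forms, use circularity and the vanishing odd moments of the zero-mean Gaussian part to kill the cross terms, invoke the zero-mean Brillinger identity \eqref{eq:brill} for the pure-noise term, and compute the two surviving mixed mean--noise covariances directly to obtain the $n^{-2}\boldsymbol{\mu\mu}^\textrm{H}$ contributions. The only cosmetic difference is that you reduce to single-sample covariances first (via independence across columns) and then sum over $i$, whereas the paper keeps the matrix-level quantities throughout and uses column independence only when evaluating the mixed terms; the resulting expansions are term-for-term identical.
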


\begin{proof}
See Appendix \ref{app:bril_nc}.
\end{proof}

We first briefly state useful available results.

\newtheorem{theorem}{Theorem}
\begin{theorem}
\label{th:Girshick} Let $\textrm{vec} \, (\boldsymbol X) \sim
\mathcal{N}(\boldsymbol 0 \, , \boldsymbol I_n \otimes \boldsymbol
\Sigma \, )$. Then the signal eigenvalues of $\hat{ \boldsymbol
R}$ in the asymptotic region of $n \gg 1$ has limiting Gaussian
distribution and we have \cite{brillinger}, \cite{lawley}
\begin{eqnarray}
\label{eq:E_li_sto} E(l_i) = \lambda_i + \sum_{j \neq i} \frac
{\lambda_i \lambda_j}{n(\lambda_i - \lambda_j)} + \mathcal
O(n^{-2})
\\
\label{eq:cov_li_lj_sto} \textrm{Cov}(l_i,l_j) =  \delta_{ij}
n^{-1} \lambda_i^2 + \mathcal O(n^{-2}).  \quad
\end{eqnarray}
\end{theorem}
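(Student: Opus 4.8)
The plan is to treat $\hat{\boldsymbol R}$ as a small random perturbation of its mean and apply second-order eigenvalue perturbation theory, extracting all statistics from the Brillinger identity \eqref{eq:brill}. Since $E(\hat{\boldsymbol R}) = \boldsymbol \Sigma$, I would write $\hat{\boldsymbol R} = \boldsymbol \Sigma + \boldsymbol E$ with $\boldsymbol E \triangleq \hat{\boldsymbol R} - \boldsymbol \Sigma$ a zero-mean Hermitian fluctuation whose entries have standard deviation $\mathcal O(n^{-1/2})$ by \eqref{eq:brill}. Let $(\lambda_i, \boldsymbol v_i)$ be the eigenpairs of $\boldsymbol \Sigma$ as in \eqref{eq:evR}, with the $\boldsymbol v_i$ orthonormal. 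Because each signal eigenvalue is simple (the model gives $\lambda_1 > \cdots > \lambda_d > \sigma^2$), the Rayleigh--Schr\"odinger expansion of the corresponding perturbed eigenvalue applies:
\begin{equation}
l_i = \lambda_i + \boldsymbol v_i^{\textrm H}\boldsymbol E\,\boldsymbol v_i + \sum_{j \neq i}\frac{|\boldsymbol v_j^{\textrm H}\boldsymbol E\,\boldsymbol v_i|^2}{\lambda_i - \lambda_j} + \mathcal O(\|\boldsymbol E\|^3).
\end{equation}

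For the mean \eqref{eq:E_li_sto} I would take expectations term by term. The first-order term vanishes, $E(\boldsymbol v_i^{\textrm H}\boldsymbol E\,\boldsymbol v_i) = \boldsymbol v_i^{\textrm H} E(\boldsymbol E)\boldsymbol v_i = 0$. For the second-order term, note that $\boldsymbol v_j^{\textrm H}\boldsymbol E\,\boldsymbol v_i = \boldsymbol v_j^{\textrm H}\hat{\boldsymbol R}\,\boldsymbol v_i$ for $j \neq i$ (its mean is $\lambda_i\delta_{ij} = 0$); setting $\boldsymbol \alpha = \boldsymbol \gamma = \boldsymbol v_j$ and $\boldsymbol \beta = \boldsymbol \zeta = \boldsymbol v_i$ in \eqref{eq:brill} gives $E(|\boldsymbol v_j^{\textrm H}\boldsymbol E\,\boldsymbol v_i|^2) = n^{-1}(\boldsymbol v_j^{\textrm H}\boldsymbol \Sigma\,\boldsymbol v_j)(\boldsymbol v_i^{\textrm H}\boldsymbol \Sigma\,\boldsymbol v_i) = n^{-1}\lambda_i\lambda_j$. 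Dividing by $\lambda_i - \lambda_j$ and summing over $j \neq i$ reproduces the claimed correction, with the third-order remainder absorbed into $\mathcal O(n^{-2})$.

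For the covariance \eqref{eq:cov_li_lj_sto} only the first-order term contributes at order $n^{-1}$, so $l_i - \lambda_i = \boldsymbol v_i^{\textrm H}\hat{\boldsymbol R}\,\boldsymbol v_i - \lambda_i + \mathcal O(n^{-1})$ and $\textrm{Cov}(l_i, l_j) = \textrm{Cov}(\boldsymbol v_i^{\textrm H}\hat{\boldsymbol R}\,\boldsymbol v_i, \boldsymbol v_j^{\textrm H}\hat{\boldsymbol R}\,\boldsymbol v_j) + \mathcal O(n^{-2})$. Applying \eqref{eq:brill} with $\boldsymbol \alpha = \boldsymbol \beta = \boldsymbol v_i$ and $\boldsymbol \gamma = \boldsymbol \zeta = \boldsymbol v_j$ gives $n^{-1}(\boldsymbol v_i^{\textrm H}\boldsymbol \Sigma\,\boldsymbol v_j)(\boldsymbol v_j^{\textrm H}\boldsymbol \Sigma\,\boldsymbol v_i) = n^{-1}\lambda_i\lambda_j\,\delta_{ij}$, which is exactly $\delta_{ij}n^{-1}\lambda_i^2$.

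Finally, the limiting Gaussianity comes from writing $\boldsymbol v_i^{\textrm H}\hat{\boldsymbol R}\,\boldsymbol v_i = n^{-1}\sum_{k=1}^n |\boldsymbol v_i^{\textrm H}\boldsymbol x_k|^2$ as an average of i.i.d.\ terms: the central limit theorem makes the vector of these quadratic forms jointly asymptotically normal, and since each $l_i$ is a smooth function of the entries of $\hat{\boldsymbol R}$ near $\boldsymbol \Sigma$, the delta method transfers normality to $\sqrt n\,(l_i - \lambda_i)$. I expect the main obstacle to be not the leading-order algebra but the rigorous bookkeeping of the remainders --- showing that the neglected third- and higher-order perturbation terms contribute only $\mathcal O(n^{-2})$ to the mean and do not perturb the $\mathcal O(n^{-1})$ covariance. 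This control hinges on the strict separation of the signal eigenvalues, which keeps every denominator $\lambda_i - \lambda_j$ bounded away from zero; the same argument fails for the degenerate noise eigenvalues, consistent with the theorem being stated for the signal eigenvalues only.
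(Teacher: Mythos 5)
Your proposal is correct: the second-order Rayleigh--Schr\"odinger expansion of each simple signal eigenvalue, with its moments evaluated through the Brillinger identity \eqref{eq:brill}, yields exactly the bias \eqref{eq:E_li_sto} and covariance \eqref{eq:cov_li_lj_sto}, and the CLT/delta-method step for limiting normality is sound. The paper states Theorem \ref{th:Girshick} without proof (citing Brillinger and Lawley), but your argument is essentially the same machinery the paper itself deploys in Appendix \ref{app:thrm} to prove the non-central generalization (Theorem \ref{theorem}), namely the perturbation expansion \eqref{eq:pert_li}--\eqref{eq:t1k} with covariances of the quadratic forms supplied by a Brillinger-type result.
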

where $\delta_{ij}$ is the Kronecker delta function. Now we
generalize Theorem \ref{th:Girshick} to the non-central case.
\begin{theorem}
\label{theorem} Let $\textrm{vec} (\boldsymbol X) \sim
\mathcal{N}(\textrm{vec} (\boldsymbol \mu) \, , \boldsymbol I_n
\otimes \, \sigma^2 \boldsymbol I_L)$. Then asymptotically for the
signal eigenvalues of $\hat {\boldsymbol R}$ we will have
\begin{eqnarray}
\label{eq:E_li_det} E(l_i) = \lambda_i + \sum_{j \neq i}\frac
{(\lambda_i + \lambda_j) \, \sigma^2 - \sigma^4}{ n(\lambda_i -
\lambda_j)} + \mathcal O(n^{-2})
\\
\label{eq:cov_li_lj_det} \textrm{Cov}(l_i,l_j) = \delta_{ij} \,
n^{-1} ( 2 \lambda_i \sigma^2 - \sigma^4) + \mathcal O(n^{-2})
\end{eqnarray}
\end{theorem}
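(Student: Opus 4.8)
The plan is to regard $\hat{\boldsymbol R}$ as a small random perturbation of its mean and to feed second-order eigenvalue perturbation theory with the second moments supplied by Lemma~\ref{lem:Bril_nc}. First I would fix the matrix whose eigenvalues are actually being perturbed. Because each column of $\boldsymbol X$ has covariance $\sigma^2\boldsymbol I_L$ and mean $\boldsymbol\mu_i$, we have $\boldsymbol R \triangleq E(\hat{\boldsymbol R}) = \sigma^2\boldsymbol I_L + n^{-1}\boldsymbol{\mu\mu}^{\textrm{H}}$, i.e.\ precisely $\boldsymbol R_{det}$, with ordered eigenpairs $(\lambda_i,\boldsymbol v_i)$. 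The identity that drives the entire computation is $n^{-1}\boldsymbol{\mu\mu}^{\textrm{H}} = \boldsymbol R - \sigma^2\boldsymbol I_L$, which is diagonal in the basis $\{\boldsymbol v_i\}$ and therefore yields $\boldsymbol v_p^{\textrm{H}}\boldsymbol{\mu\mu}^{\textrm{H}}\boldsymbol v_q = n(\lambda_p - \sigma^2)\,\delta_{pq}$, while $\boldsymbol v_p^{\textrm{H}}\boldsymbol\Sigma\,\boldsymbol v_q = \sigma^2\delta_{pq}$.

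Next I would introduce $\boldsymbol E \triangleq \hat{\boldsymbol R} - \boldsymbol R$, a zero-mean Hermitian perturbation whose entries in the $\{\boldsymbol v_i\}$ basis have variance $\mathcal O(n^{-1})$, and use the standard analytic expansion of a \emph{simple} eigenvalue (legitimate here because the signal eigenvalues $\lambda_1 > \cdots > \lambda_d$ are distinct):
\[
l_i = \lambda_i + \boldsymbol v_i^{\textrm{H}}\boldsymbol E\,\boldsymbol v_i + \sum_{j\neq i}\frac{|\boldsymbol v_i^{\textrm{H}}\boldsymbol E\,\boldsymbol v_j|^2}{\lambda_i - \lambda_j} + \mathcal O(\|\boldsymbol E\|^3).
\]
Taking expectations removes the linear term since $E(\boldsymbol E)=\boldsymbol 0$, leaving $E(l_i) = \lambda_i + \sum_{j\neq i}(\lambda_i-\lambda_j)^{-1}\,E|\boldsymbol v_i^{\textrm{H}}\boldsymbol E\,\boldsymbol v_j|^2 + \mathcal O(n^{-2})$. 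For the covariance I would retain only the leading term $l_i - \lambda_i \approx \boldsymbol v_i^{\textrm{H}}\boldsymbol E\,\boldsymbol v_i$, so that $\textrm{Cov}(l_i,l_j) = \textrm{Cov}(\boldsymbol v_i^{\textrm{H}}\hat{\boldsymbol R}\,\boldsymbol v_i,\,\boldsymbol v_j^{\textrm{H}}\hat{\boldsymbol R}\,\boldsymbol v_j) + \mathcal O(n^{-2})$. Both targets are now second moments of quadratic forms $\boldsymbol v_p^{\textrm{H}}\hat{\boldsymbol R}\,\boldsymbol v_q$, exactly what Lemma~\ref{lem:Bril_nc} evaluates.

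The remaining work is bookkeeping with Lemma~\ref{lem:Bril_nc} specialized to $\boldsymbol\Sigma = \sigma^2\boldsymbol I_L$ and evaluated on eigenvectors. For $j\neq i$ the off-diagonal mean vanishes, so matching $\boldsymbol\alpha=\boldsymbol\gamma=\boldsymbol v_i$, $\boldsymbol\beta=\boldsymbol\zeta=\boldsymbol v_j$ gives $E|\boldsymbol v_i^{\textrm{H}}\boldsymbol E\,\boldsymbol v_j|^2 = n^{-1}\sigma^4 + n^{-1}\sigma^2(\lambda_i-\sigma^2) + n^{-1}\sigma^2(\lambda_j-\sigma^2) = n^{-1}[(\lambda_i+\lambda_j)\sigma^2 - \sigma^4]$, where the last two contributions come from the $\boldsymbol{\mu\mu}^{\textrm{H}}$ terms through the identity above; dividing by $\lambda_i-\lambda_j$ and summing reproduces \eqref{eq:E_li_det}. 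For the covariance, taking $\boldsymbol\alpha=\boldsymbol\beta=\boldsymbol v_i$ and $\boldsymbol\gamma=\boldsymbol\zeta=\boldsymbol v_j$ forces a factor $\delta_{ij}$ on every one of the three terms and collapses them to $\delta_{ij}\,n^{-1}[\sigma^4 + 2\sigma^2(\lambda_i-\sigma^2)] = \delta_{ij}\,n^{-1}(2\lambda_i\sigma^2 - \sigma^4)$, which is \eqref{eq:cov_li_lj_det}.

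The main obstacle is not this algebra but the justification that the perturbation series may be truncated and the expectation exchanged with it so the neglected terms are truly $\mathcal O(n^{-2})$. This rests on the concentration of $\boldsymbol E$---its eigenbasis entries being $\mathcal O_P(n^{-1/2})$, a consequence of Lemma~\ref{lem:Bril_nc}---together with the spectral separation of the distinct signal eigenvalues. I would also have to verify that the expectation of the third-order term does not spoil the $\mathcal O(n^{-2})$ claim, which follows because the leading (Gaussian) fluctuations of $\hat{\boldsymbol R}$ have vanishing third cumulant, mirroring the argument underlying Theorem~\ref{th:Girshick}.
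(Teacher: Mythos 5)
Your proposal is correct and follows essentially the same route as the paper's proof: a second-order eigenvalue perturbation expansion of $\hat{\boldsymbol R}$ about $\boldsymbol R$ (your expansion in $\boldsymbol E$ is exactly the paper's expansion in $p\boldsymbol\Delta$ with coefficients $t_{ij}$), combined with Lemma~\ref{lem:Bril_nc} specialized via the identity $\boldsymbol{\mu\mu}^{\textrm{H}} = n(\boldsymbol R - \sigma^2\boldsymbol I)$, which is precisely the substitution the paper uses in Appendix~\ref{app:thrm}. Your write-up just carries out the moment bookkeeping more explicitly than the paper's terse presentation.
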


\begin{proof}
See Appendix \ref{app:thrm}.
\end{proof}

\subsection{Noise Eigenvalues}
\label{subsec:stat_noise}

The eigenvalues associated with the noise subspace come from a
spherical subspace. Therefore, they are not sufficiently
separated, but placed tight together around the noise power
$\sigma^2$. Then, the perturbation analysis in Appendix
\ref{app:thrm} is no longer true, since their eigenvectors change
dramatically with a small perturbation in $\boldsymbol R$. The
distribution of the noise eigenvalues is identical to the
noise-only observations in an $L-d$ dimensional noise subspace
with a small negative bias introduced by signal eigenvalues
\cite{johnstone}. Here, we introduce two statistical distributions
to show that some noise eigenvalues are considerably larger than
$\sigma^2$. This invalidates the approximations used in
\cite{wang} for calculating $p_m$. In low SNRs, the weakest signal
eigenvalue approaches the largest noise eigenvalue but cannot pass
it due to the ordering of the eigenvalues. In this subsection, we assume $\sigma^2 = 1$.

\subsubsection{The Mar\v{c}enko-Pastur distribution}

For sufficiently large $n$ and $L$, with $\gamma = n/L$ and in the
null case, the distribution of unordered noise eigenvalues is
\cite {johnstone}
\begin{equation}
\label{eq:pastur} g(l) = \frac{\gamma}{2\pi l} \sqrt{(b-l)(l-a)}
\quad : \quad a \leq l \leq b
\end{equation}
where $a=(1-\gamma^{-1/2})^2$, $b=(1+\gamma^{-1/2})^2$, as
depicted in Fig. \ref{fig:pastur}. Note that $g(l)$ is a univariate distribution since it expresses the \emph{bulk} distribution \cite{johnstone} of the eigenvalues, i.e., in the null case, the eigenvalues of the covariance matrix are $L$ independent samples of this distribution. 
\begin{figure}
\centering
\includegraphics[width = 0.45 \textwidth]{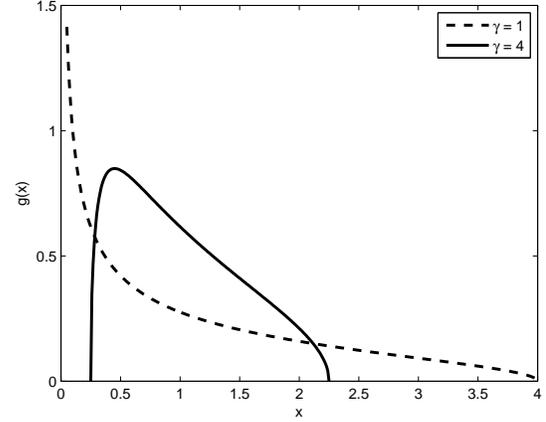}
\caption{Limiting densities of the noise subspace eigenvalues for
$\gamma=1$ and $\gamma=4$ cases. The spread of the eigenvalues
around 1 is evident.} \label{fig:pastur}
\end{figure}

\subsubsection{The Tracy-Widom distribution}

The largest eigenvalue of a complex correlation matrix in the null
case has a bell-shaped distribution called $F_2$ with moments
\cite {johnstone}
\begin{equation}
E(l_1) \simeq \mu_{nL} - 1.8 \, \sigma_{nL}
\end{equation}
\begin{equation}
\textrm{Std} (l_1) \simeq 0.9 \, \sigma_{nL}
\end{equation}
in which
\begin{equation}
\mu_{nL}= \bigg(1 + \sqrt{\frac{L}{n}} \, \bigg)^2
\end{equation}
\begin{equation}
\sigma_{nL}= \sqrt {\frac{\mu_{nL}}{n}} \bigg(\frac{1}{\sqrt{n}} +
\frac{1}{\sqrt L} \, \bigg)^{1/3}.
\end{equation}
Let's see a numerical example. Assume $n=100$ and $L=10$, then
$E(l_1) \simeq 1.55$ and $\textrm{Std} (l_1) \simeq 0.09$ which
implies that $l_1 > 1.3$ with high probability. We conclude that
the signal eigenvalues should be well larger than $\sigma^2$.

\section{Probability of Missed Detection}
\label{sec:p_miss}

\subsection{Method of Calculation}
\label{sec:p_miss_calc}

In this subsection, using the statistical tools developed in the
previous section, we calculate $p_m$ for MDL method. $p_{fa}$ is
negligible in moderate values of $n$ and $L$. For example, in
$L=3$ and $n=30$, $p_{fa} \simeq 0.003$ and decays rapidly when
$n$ and $L$ increase. $p_m$ can be used to estimate the minimum
energy level of a source to be detectable by the system. It can
also be used to determine the system capability for resolving very
close sources. Then, we concentrate on the $p_{m1} \triangleq
p_m(d=1)$ and $p_{m2} \triangleq p_m(d=2)$, although our method
can be used for the general scenario. Let $H_1$ denote the
situation in which only one source is present
\begin{equation}
\label{eq:pm1_def} p_{m1} = p \, \big(\Lambda (0,L,n) < \Lambda
(1,L,n) \, | \, H_1 \big).
\end{equation}
Using \eqref{eq:MDL} and rearranging the terms in
\eqref{eq:pm1_def} we get
\begin{eqnarray}
\label{eq:pm1_rearr} p_{m1} = p \, \big( L \log \bigg(
\frac{a_0}{\,g_0}
\bigg) - (L-1) \log \bigg( \frac{a_1}{g_1} \bigg) \qquad \nonumber \\
< \frac{1}{2n} (2L-1) \log(n) \big)
\end{eqnarray}
By the definition of $a_d$ in \eqref{eq:a_d}, we can write
\begin{equation}
\label{eq:a0_a1} a_0 = \frac{1}{L} \, l_1 + \frac{L-1}{L} \, a_1
\end{equation}
Similarly, for the geometric mean using \eqref{eq:g_d} we have
\begin{equation}
\label{eq:g0_g1} g_0^L = l_1 \, g_1^{L-1}
\end{equation}
Substituting \eqref{eq:a0_a1} and \eqref{eq:g0_g1} in
\eqref{eq:pm1_rearr}, we get \cite{wang}
\begin{equation}
\label{eq:pm1_Qm} p_{m1} = p \, \big( \log Q_{m1} \big(\textstyle
{\frac{l_1}{a_1}} \big) < T_1 \big)
\end{equation}
where
\begin{equation}
\label{eq:Qm1} Q_{m1}(x) \triangleq \frac{1}{x} \Big( \, 1 +
\frac{x-1}{L} \, \Big)^L
\end{equation}
and
\begin{equation}
\label{eq:T} T_1 = \frac{1}{2n} (2L-1) \log(n)
\end{equation}

In \cite{wang}, The function $\log (Q_{m1}(x))$ is approximated by
its second order Taylor series near $x=1$. This is one source of
avoidable error in the method. The smallest eigenvalue of the
signal subspace is greater than the largest eigenvalue of the
noise subspace, which is, from subsection \ref{subsec:stat_noise},
larger than $\sigma^2$. Also recall that $a_1 \simeq \sigma^2$, we
conclude that $x>1$. It is evident that the function $\log
(Q_{m1}(x))$ is uniformly increasing in the region $x>1$,
therefore we can translate the inequality in \eqref{eq:pm1_Qm} to
a simpler one
\begin{equation}
\label{eq:pm1_Tx} p_{m1} = p \, ( x < T_{1\,x} )
\end{equation}
where
\begin{equation}
\label{eq:T1x} \log(Q_{m1}(T_{1\,x})) = T_1
\end{equation}
Using \eqref{eq:pm1_Tx}, two steps are required for calculation of
$p_m$, computing $T_{1\,x}$ from \eqref{eq:T1x} and determining
the statistics of $x \triangleq {l_1}/{a_1}$ in \eqref{eq:pm1_Tx}.

Unfortunately, \eqref{eq:T1x} cannot be solved analytically for
$T_{1\,x}$, then we find an approximate solution in the first
step. Rearrange \eqref{eq:T1x} to get
\begin{equation}
\label{eq:Tx_rear} \bigg(1+ \frac{T_{1x}-1}{L} \bigg)^L = T_{1x}
\, e^{T_1}
\end{equation}
Expanding the left-hand-side of \eqref{eq:Tx_rear} to the second
order, assuming $L$ is sufficiently large and solving the
resulting quadratic equation, gives a first approximation for
$T_{1x}$
\begin{equation}
\label{eq:T1x_app} T_{1x}^{(1)} = 1+ \sqrt{e^{2T_1} - 1}
\end{equation}
Now since the function in L.H.S. of \eqref{eq:T1x} is smooth, we
can use a first order Taylor series around the solution in
\eqref{eq:T1x_app} to get closer to the exact solution
\begin{equation}
\label{eq:T1x_recur} T_{1x}^{(i+1)} = T_{1x}^{(i)} + \big( \, T_1
- T_1^{(i)} \big) \frac{T_{1x}^{(i)} + 1} {T_{1x}^{(i)} -1}
\end{equation}
where $T_1^{(i)}$ depends on $T_{1x}^{(i)}$ through
\eqref{eq:T1x}. Application of \eqref{eq:T1x_recur} for a few
times gives a very accurate solution. Note that computation of
$T_{1x}$ is done after setting $n$ and $L$, but is not dependent
on the SNR.

The next step in calculating $p_{m1}$ is determining the
statistics of $x$. From \eqref{eq:E_li_sto} and
\eqref{eq:cov_li_lj_sto}, we can see that $l_1$ is distributed as
\begin{equation}
\label{eq:dist_l1}  l_1  \sim \; \mathcal{N} \bigg(\lambda_1 +
\frac {(L-1)\lambda_1 \sigma^2}{n(\lambda_1 - \sigma^2 )} \;\, ,
\; \frac{\lambda_1^2} {n} \bigg)
\end{equation}
In \cite{wang, zhang, kavehCOR, fishler}, the bias term of $l_1$
is not considered, while a numerical example can clarify the
point. Assume that $n=100$, $L=10$, and $\sigma^2=1$. In the SNR
in which $p_{m1}$ starts to become large, $\lambda_1 = 1.5$,
$E(l_1)=2.2$, and $\textrm{Std} (l_1) = 0.15$. Therefore,
overlooking the bias term ($0.7$) introduces large error to the
analysis. Since in the critical SNRs, the signal eigenvalue get
closer to the noise eigenvalues, the denominator in
\eqref{eq:E_li_sto} reduces and the bias term gets large.

In the null case, $E(a_0)=\frac{1}{L}E( \textrm{Tr} (\hat
{\boldsymbol R})) = \sigma^2 = 1$, which recommends that $E(a_1 |
H_1) = \sigma^2$. But a signal eigenvalue can cause a negative
bias on $a_1$, numerically about 2\%. Then, although we neglect
the variance of $a_1$ which is very small compared to the variance
of $l_1$, we should take into account the bias to achieve an exact
performance evaluation. In fact, the variances of the eigenvalues (regardless of being a noise eigenvalue or a signal one) increases with the mean of the eigenvalue. This can be seen in the simulations and can be justified for the noise eigenvalues with noticing the decay of the Marcenko-Pastur distribution in Fig. \ref{fig:pastur} which results in increasing variance of its order statistics. The variance of any order statistic of a distribution is inversely proportional to the squared value of the distribution in the vicinity of the mean value of that order statistics. A classical example of this fact is the variance of the median. For the signal eigenvalues, this is already shown in \eqref{eq:cov_li_lj_sto} and \eqref{eq:cov_li_lj_det}. This fact, along with the averaging in the calculation of $a_1$ shows that its variance is negligible in the analysis. To calculate the bias, note that $E(l_1) + (L-1) E(a_1) = E(\textrm{Tr} (\hat{\boldsymbol R})) = \textrm{Tr} (\boldsymbol R) = \lambda_1 + (L-1) \sigma^2$. This besides \eqref{eq:E_li_sto}
gives \cite{wong}:
\begin{equation}
\label{eq:a(1)_bias} H_1 \quad : \quad a_1  \simeq  \; \sigma^2 -
\frac{\sigma^2 \lambda_1}{n ( \lambda_1 - \sigma^2)}
\end{equation}
Using \eqref{eq:dist_l1} and \eqref{eq:a(1)_bias}, the
distribution of $x$ is determined as a Gaussian random variable
with known mean $\mu_x$ and variance $\sigma_x^2$. Then, $p_{m1}$
can be calculated as
\begin{equation}
\label{eq:pm1} p_{m1} = 1 - Q \bigg( \frac{T_{1x} -
\mu_x}{\sigma_x} \bigg)
\end{equation}
in which 
\begin{equation}
Q(t) = \int _{t}^\infty \frac{1}{\sqrt {2\pi}} e ^{- \frac{u^2}{2}} du. 
\end{equation}
The same procedure can be used to calculate $p_{m2}$. The following approximation is widely used and justified in the literature \cite[eq. (24)]{wang}, \cite[eq. (II.3a)]{zhang}:
\begin{equation}
p_{m2} \simeq p \, \big(  \Lambda (1,L,n) < \Lambda (2,L,n) \, | \, H_2 \big)
\end{equation}
It basically states that the probability of missing one of the sources is very larger than missing both of them. We drop the details and just give some of the points important in the calculation of $p_{m2}$:
\begin{equation}
\label{eq:pm2_Qm} p_{m2} = p \, \big( \log Q_{m2} \big(\textstyle
{\frac{l_2}{a_2}} \big) < T_2 \big)
\end{equation}
in which the threshold $T_2$ and the function $Q_{m2}$ are defined as
\begin{equation}
\label{eq:T2} T_2 = \frac{1}{2n} (2L-3) \log (n)
\end{equation}
\begin{equation}
\label{eq:Qm2} Q_{m2}(x) = \frac{1}{x} \, \Big( \, 1 +
\frac{x-1}{L-1} \, \Big)^{L-1}
\end{equation}
\begin{equation}
\label{eq:x2}
x \triangleq \frac{l_2}{a_2}
\end{equation}
The recursive equation to estimate the threshold $T_{2x}$ will be
\begin{equation}
\label{eq:T2x_rec} T_{2x}^{(i+1)} = T_{2x}^{(i)} + \big( T_2 -
T_2^{(i)} \big) \frac{\; T_{2x}^{(i)} (L - 2 + T_{2x}^{(i)} ) }
{(L-2)(T_{2x}^{(i)} -1)}
\end{equation}
The distribution of $l_2$ will be 
\begin{equation}
\label{eq:dist_l2}  l_2  \sim \; \mathcal{N} \bigg(\lambda_2 +
\frac {(L-2)\lambda_2 \sigma^2}{n(\lambda_2 - \sigma^2 )} - \frac {\lambda_1 \lambda_2} {n(\lambda_1 - \lambda_2)}  \;\, ,
\; \frac{\lambda_2^2} {n} \bigg)
\end{equation}
$a_2$ will have a negligible variance and can be estimated by its mean value:
\begin{equation}
\label{eq:a(2)_bias} H_2 \quad : \quad E(a_2)  =  \; \sigma^2 -
\frac{\sigma^2 \lambda_1}{n ( \lambda_1 - \sigma^2)}  -  \frac{\sigma^2 \lambda_2}{n ( \lambda_2 - \sigma^2)}
\end{equation}
Now, using \eqref{eq:dist_l2} and \eqref{eq:a(2)_bias}, the distribution of $x$ in \eqref{eq:x2} can be found and $p_{m2}$ is achieved as in \eqref{eq:pm1}. The same procedure can be used for determining $p_m$ in any number of sources.

\subsection{Deterministic Signal Model}
\label{sec:det}

Although the first- and second-order statistical properties of the
signal subspace eigenvalues are different under stochastic and
deterministic signal models, the performance of the
MDL is the same under two models. As explained in section
\ref{sec:p_miss_calc}, $p_m$ depends on the statistics of the weakest
signal eigenvalue $l_d$. We show that these statistics grow
similar under two models when $l_d$ approaches the noise
eigenvalues.  Note that, for a fair comparison of the two signal models, the signal second-order characteristics should be the same (see e.g. \cite[sec. V]{ottersten}). Therefore, we have $ \lim _ {n \to \infty} \boldsymbol S_{\textrm{det}} \boldsymbol S_{\textrm{det}}^\textrm{\,H} / n =
E(\boldsymbol s_{\textrm{sto}} \boldsymbol
s_{\textrm{sto}}^\textrm{H})$, which results in $\boldsymbol
R_{\textrm{det}} = \boldsymbol R_{\textrm{sto}}$ and hence
$\lambda_{i \, \textrm{det}} = \lambda_{i \, \textrm{sto}}, \;
i=1, \ldots ,L$. In the situations where $p_m$ starts to grow
large, $l_d$ is barely larger than the noise eigenvalues,
$\lambda_d \simeq \sigma^{2}$, then from \eqref{eq:E_li_det} we
have
\begin{equation}
\label{eq:E_li_det_eq} E(l_{d \, \textrm{det}}) \simeq \lambda_d +
\sum_{i \neq d} \frac{\sigma^2 \lambda_i}{n(\sigma^2 - \lambda_i)}
\end{equation}
which is the same as \eqref{eq:E_li_sto} in stochastic signal
model. For the variances, we assume that $\lambda_d$ has
approached the upper limit of the noise eigenvalues
\begin{equation}
\label{eq:var_sto_det} \lambda_d \simeq \sigma^2
\bigg(1+\sqrt{\frac{L}{n}} \, \bigg)^2
\end{equation}
which is the upper limit of the Marcenko-pastur distribution in
\eqref{eq:pastur}. Note that, as signal power reduces, its
eigenvalue approaches the noise eigenvalues roughly about
$\sigma^2$. But $\lambda_d$ cannot be smaller than the largest
noise eigenvalue due to the sorting of the eigenvalues. Then as
the SNR reduces, $\lambda_d$ approaches the upper limit of the
noise eigenvalues about \eqref{eq:var_sto_det}. In fact, we are
using a better approximation for $\lambda_d$ in calculating the
variance in \eqref{eq:var_sto_det} rather than in calculating the
expectation in \eqref{eq:E_li_det_eq}. Assuming $L \ll n$, a first
order expansion of \eqref{eq:var_sto_det} can be used in
\eqref{eq:cov_li_lj_sto} to give
\begin{equation}
\label{eq:var_ld_sto} \textrm{Var}_\textrm{sto} \, (l_d) =
\frac{1}{n} \, \lambda_d ^2 \simeq \frac{1}{n} \, \sigma^4 \bigg(
1 + 4 \sqrt{\frac{L}{n}} \; \bigg)
\end{equation}
and in \eqref{eq:cov_li_lj_det} to give
\begin{eqnarray}
\label{eq:var_ld_det} \textrm{Var}_\textrm{det} \, (l_d) =
\frac{1}{n} \, \big( 2 \lambda_d \sigma^2 - \sigma^4 \big) \qquad
\qquad \qquad \qquad \nonumber \\ \simeq \frac{1}{n} \, \sigma^4
\bigg[ 2 \bigg( 1 + 2 \sqrt{\frac{L}{n}} \; \bigg) - 1 \bigg]
\end{eqnarray}
which reduces to the result in \eqref{eq:var_ld_sto} and we can
conclude that the variance of $l_d$ is the same under two models
in low SNRs. Hence, $p_m$ is approximately the same under two
signal models. This is in harmony with the same result in the DOA estimation problem, where the performance of the estimators are the same under two signal model \cite{ottersten}.

\section{Simulation Results}
\label{sec:simul}

In this section, simulation results are presented to support the
theoretical derivations. We consider $p_m$ in different conditions
of number of snapshots $n$, and number of sensors $L$ in a Uniform
Linear Array with half-wavelength inter-element distance. Our
estimate is compared with \cite{wang} and \cite{fishler}. Results
are presented for two closely spaced sources in $p_{m2}$, and one
source in $p_{m1}$. When the sources get closer to each other, the
weaker signal eigenvalue approaches the noise eigenvalues and
possibly miss will occur. Therefore, for a fixed angular distance
of the sources, a minimum SNR is required for the array to be able
to detect both sources.

Two equally powered uncorrelated signal sources in $\pm 2 ^o$ are
assumed. The SNR is defined as the ratio of each signal variance
to noise variance (i.e. sensor SNR).
\begin{figure}
\centering
\includegraphics[width = 0.5\textwidth]{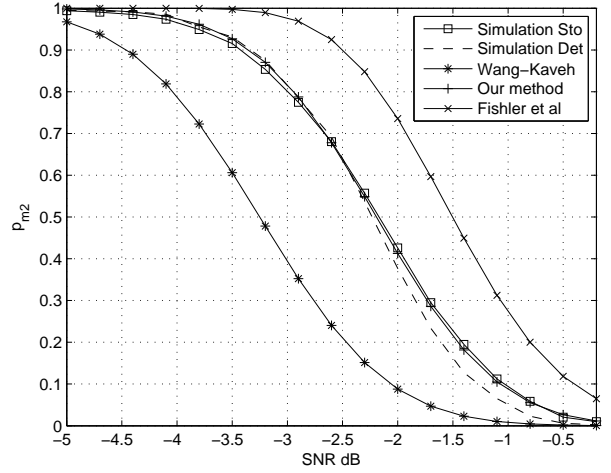}
\caption{$p_{m2}$ of MDL method when number of sensors $L=10$, and
number of snapshots $n=100$.} \label{fig:pm2_10_100}
\end{figure}
\begin{figure}
\centering
\includegraphics[width = 0.5\textwidth]{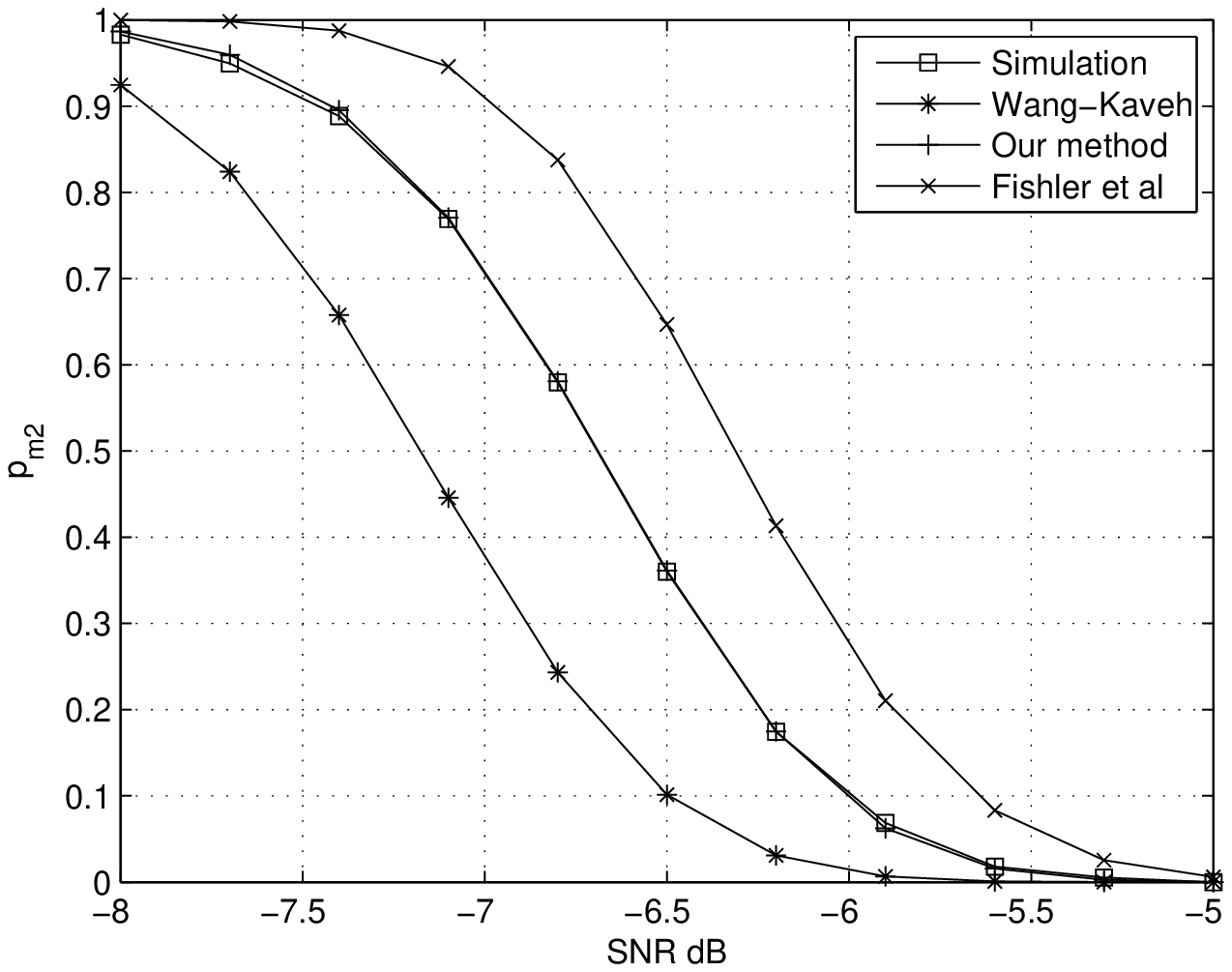}
\caption{$p_{m2}$ of MDL method when number of sensors $L=10$, and
number of snapshots $n=900$.} \label{fig:pm2_10_900}
\end{figure}
\begin{figure}
\centering
\includegraphics[width = 0.5\textwidth]{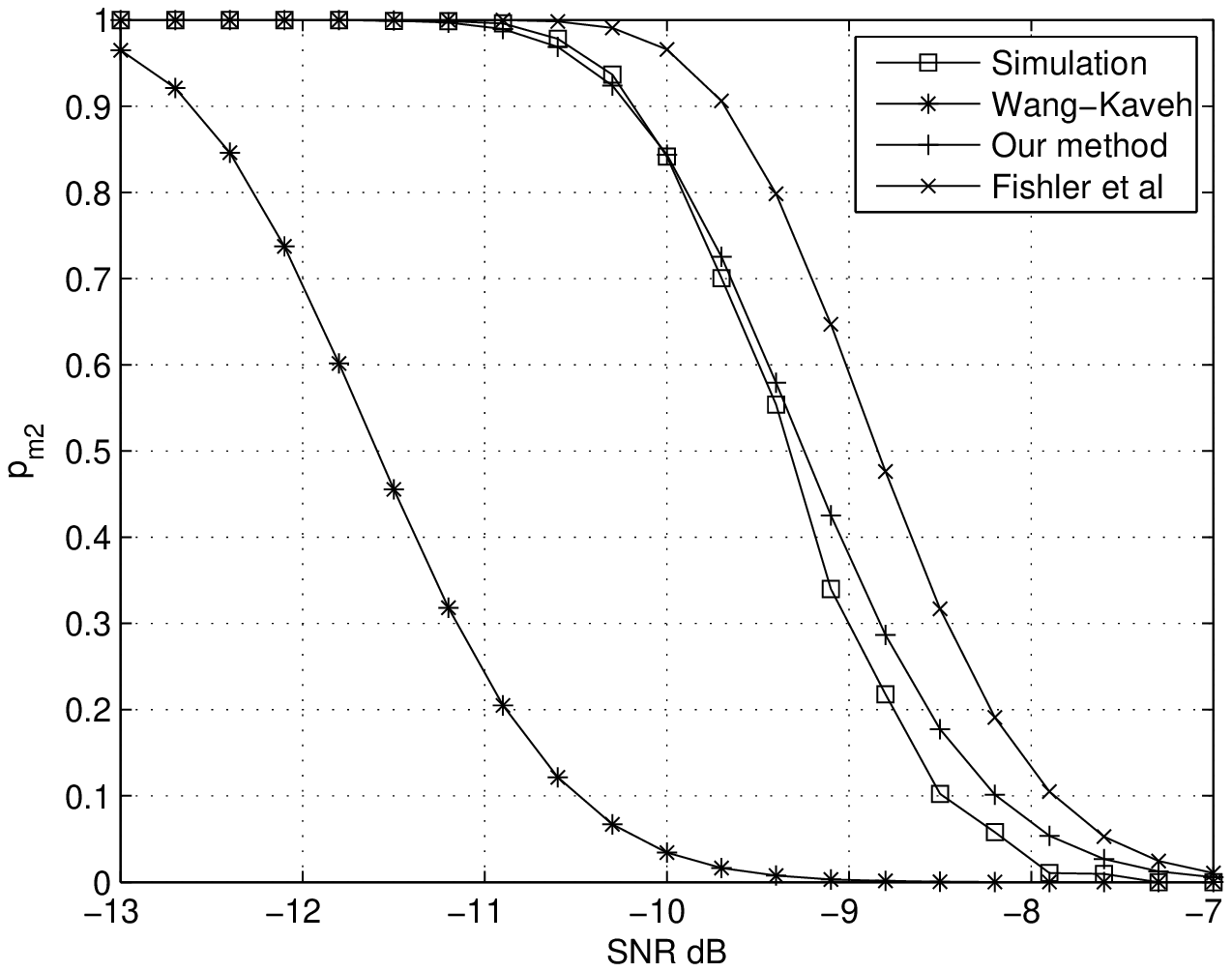}
\caption{$p_{m2}$ of MDL method when number of sensors $L=32$, and
number of snapshots $n=64$.} \label{fig:pm2_32_64}
\end{figure}
\begin{figure}
\centering
\includegraphics[width = 0.5\textwidth]{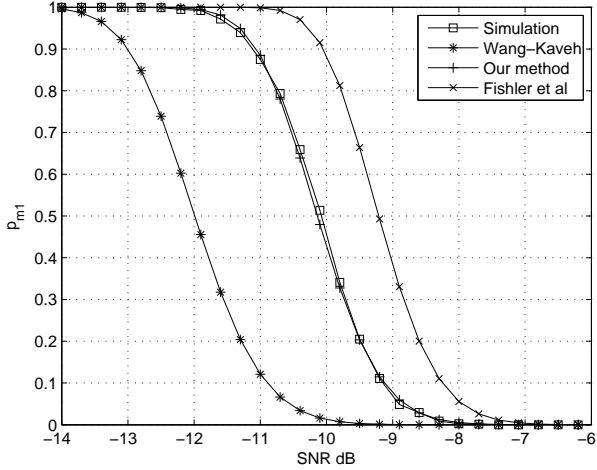}
\caption{$p_{m1}$ of MDL method when number of sensors $L=32$, and
number of snapshots $n=64$.} \label{fig:pm1_32_64}
\end{figure}
Figs \ref{fig:pm2_10_100}, \ref{fig:pm2_10_900}, and
\ref{fig:pm2_32_64} show the corresponding results for $p_{m2}$
different situations in terms of $n$ and $L$. Fig.
\ref{fig:pm1_32_64} presents the results for $p_{m1}$ in the worst
case of parameters. The superiority of our method in estimating
the simulation results is evident. In Fig. \ref{fig:pm2_10_100},
simulation results are presented for both deterministic and
stochastic signals, which confirms the approximate equality of
$p_m$ under two models. This equality improves as the number of
observations $n$ increases. Note that our method is used to
estimate $p_m$ under stochastic signal model in Fig. \ref
{fig:pm2_10_100}. The analysis in \cite{wang} under-estimates
$p_m$ with a horizontal distance of about 0.5-2 dB. In fact, this
method improves when $n$ gets larger since in this situation, the
neglected biases reduce. The estimate of \cite{fishler} is better
than \cite{wang}, with over-estimation of $p_m$ equivalent with a
horizontal distance about 0.5-1 dB. Note that in the extreme case
of $L=32$ and $n=64$ of Fig. \ref{fig:pm2_32_64}, our analysis
starts to degrade since the asymptotic assumption is no longer
valid. Though, in most cases, our estimate exhibits horizontal
distance of about 0.03 dB.

We have seen that the analysis in \cite{wang, kavehCOR, zhang}
lacks the inclusion of biases of the eigenvalues and also suffers
from some inaccurate approximations. But the analysis in
\cite{fishler} requires more scrutiny since as we have seen in the
simulation results, this analysis gives completely different
results from \cite{wang}. Authors in \cite{fishler} use asymptotic
conditions to show that $\Lambda(d-1) - \Lambda(d)$ converges in
distribution to a Gaussian random variable with mean $\mu$ and
variance $\sigma^2$. Simulations show that although the formula
derived for $\sigma^2$ in \cite{fishler} is a very good estimate
of the empirical value, the same is not true for the mean $\mu$,
which in fact shows considerable deviation. This disagreement is
present in small $n$ as well as large $n$ conditions. The derived
result for the mean of the Gaussian distribution in \cite[eq.
(19)]{fishler} is
\begin{eqnarray}
\label{eq:mu_fishler} \mu = n \log \bigg( \frac{\sigma_n^2}
{\lambda_d} \bigg[ 1+ \frac{1}{L-d+1} \Big( \, \frac{\lambda_d}
{\sigma_n^2} - 1\Big) \bigg] ^{L-d+1} \, \bigg) \nonumber \\
+ \; 0.5 \Big( 2d - 2L - 1 \Big) \log (n)
\end{eqnarray}
which we can see that is $n \log Q_{md}(x)$ plus some nonrandom
term in the notation of our analysis. Now, it is evident that
\eqref{eq:mu_fishler} is derived assuming $E(l_i) = \lambda_i$ for
signal subspace and $E(a_d) = \sigma_n^2$, thus every biases in
the distribution of $l_i$ and $a_d$ is ignored. Additionally,
Although we can assume the distribution of $x$ to be Gaussian, it
is not easy to assume normality for the function $\Lambda(d-1) -
\Lambda(d)$ since it is a highly nonlinear function of $x$.
Simulations show that the normality assumption is approximately
valid only for large values of $n$, say $n \simeq 1000$. Another
issue is that nonlinearity of the function $\log (Q_{md}(x))$ move
the mean of the distribution which is not taken into account. 

Here, we will give further simulation results that compare our analysis with the one presented in \cite{fishler}. We assume the same conditions as in \cite[Fig. 1]{fishler} which is $n=900$, $L=7$, and two Gaussian sources in $\boldsymbol \theta = [-5^o \; \, +10^o]$. 
\begin{figure}
\centering
\includegraphics[width = 0.5\textwidth]{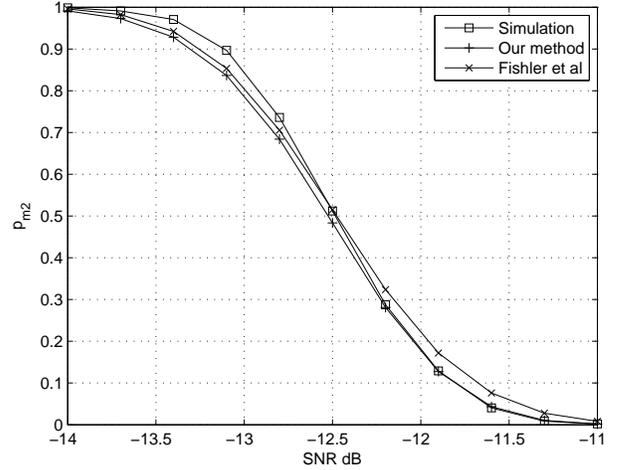}
\caption{$p_{m2}$ of MDL method when number of sensors $L=7$, and
number of snapshots $n=900$. The performance prediction method in \cite{fishler} works well in this set of parameters.} \label{fig:fishler1}
\end{figure}
The results are shown in Fig. \ref{fig:fishler1}, where the experimental performance of MDL method is accurately predicted by both our method and the method presented in \cite{fishler}. Although from a theoretical point of view, the method of \cite{fishler} is not comprehensive enough, in this special case of parameters it works well. If we change the sources DOAs and keep every other parameters unchanged we will see that the predictions of \cite{fishler} degrades. 
\begin{figure}
\centering
\includegraphics[width = 0.5\textwidth]{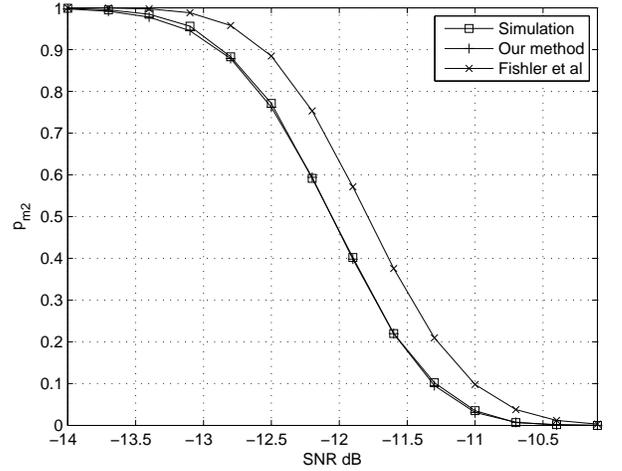}
\caption{$p_{m2}$ of MDL method when number of sensors $L=7$, and
number of snapshots $n=900$. The performance prediction method in \cite{fishler} does not work well in this set of parameters.} \label{fig:fishler2}
\end{figure}
Figure \ref{fig:fishler2} shows the experimental results and theoretical predictions when sources are in $\boldsymbol \theta = [-5^o \; \, 20^o]$. It is evident that the method of \cite{fishler} does not work well anymore while our method is still accurate. Note that we have investigated its performance when sources are very close to each other in our previous simulation results where the method in [8] failed to predict the performance accurately. Therefore, the method in [8] cannot be a reliable method of analytical performance calculation.

\section{Conclusion}
\label{sec:con}

An accurate performance analysis for the probability of missed
detection of the MDL source enumeration method was presented.
Statistical characterization of the principal components of the
covariance matrix helped to take good assumptions and
approximation which resulted in improved estimations of $p_m$. It
is proved that the performance is approximately identical under
stochastic and deterministic signal models using a perturbation
analysis which gives the statistical properties of eigenvalues in
the deterministic signal model. Simulation results show the
superiority of the proposed analysis compared with the previous
results.

\appendices

\section{Proof of Lemma \ref{lem:Bril_nc}}
\label{app:bril_nc}

Let $\boldsymbol X \triangleq \boldsymbol Y -
\boldsymbol \mu$ and rearrange the covariance in
(\ref{eq:bril_nc}) as
\begin{eqnarray}
\label{eq:covx} n^2c  =   \textrm{Cov} (\boldsymbol \alpha ^
\textrm{H} \boldsymbol {XX}^ \textrm{H} \boldsymbol \beta +
\boldsymbol \alpha ^ \textrm{H} \boldsymbol {\mu X}^ \textrm{H}
\boldsymbol \beta + \boldsymbol \alpha ^ \textrm{H} \boldsymbol
{X\mu}^ \textrm{H} \boldsymbol \beta \nonumber  \\  , \boldsymbol
\gamma ^ \textrm{H} \boldsymbol {XX}^ \textrm{H} \boldsymbol \zeta
+ \boldsymbol \gamma ^ \textrm{H} \boldsymbol {\mu X}^ \textrm{H}
\boldsymbol \zeta + \boldsymbol \gamma ^ \textrm{H} \boldsymbol {X
\mu}^ \textrm{H} \boldsymbol \zeta).
\end{eqnarray}
Circularity of the distribution and zero odd moments of zero-mean Gaussian
distribution reduces (\ref{eq:covx}) to
\begin{eqnarray}
\label{eq:covxsimp} n^2c  =   \textrm{Cov} (\boldsymbol \alpha ^
\textrm{H} \boldsymbol {XX}^ \textrm{H} \boldsymbol \beta \, ,
\boldsymbol \gamma ^ \textrm{H} \boldsymbol {XX}^ \textrm{H}
\boldsymbol \zeta) \nonumber \\ + \: \textrm{Cov} (\boldsymbol
\alpha ^ \textrm{H} \boldsymbol {\mu X}^ \textrm{H} \boldsymbol
\beta \, , \boldsymbol \gamma ^ \textrm{H} \boldsymbol {\mu X}^
\textrm{H} \boldsymbol \zeta) \nonumber \\ + \: \textrm{Cov}
(\boldsymbol \alpha ^ \textrm{H} \boldsymbol {X\mu}^ \textrm{H}
\boldsymbol \beta \, , \boldsymbol \gamma ^ \textrm{H} \boldsymbol
{X \mu}^ \textrm{H} \boldsymbol \zeta).
\end{eqnarray}
The first term in (\ref{eq:covxsimp}) is given by
(\ref{eq:brill}). The fact that $\boldsymbol x_i \perp \boldsymbol
x_j : i \ne j$ reduces the second term as
\begin{eqnarray}
\boldsymbol \alpha ^ \textrm{H} \! \boldsymbol \mu \,
E(\boldsymbol X^ \textrm{H} \boldsymbol {\beta \, \zeta} ^
\textrm{H} \boldsymbol X)\, \boldsymbol \mu ^ \textrm{H}
\boldsymbol \gamma = \nonumber \\ \boldsymbol \alpha ^ \textrm{H}
\! \boldsymbol \mu \, \textrm{diag} (E(\boldsymbol x_i^ \textrm{H}
\boldsymbol {\beta \, \zeta} ^ \textrm{H} \boldsymbol
x_i))\, \boldsymbol \mu ^ \textrm{H} \boldsymbol \gamma = \nonumber \\
(\boldsymbol \alpha ^ \textrm{H} \boldsymbol {\mu\mu}^ \textrm{H}
\, \boldsymbol \gamma )( \boldsymbol \zeta^ \textrm{H} \boldsymbol
\Sigma \, \boldsymbol \beta).
\end{eqnarray}
The third term in (\ref{eq:covxsimp}) can be derived in the same
way. Note that all the three terms in the right-hand-side of
\eqref{eq:bril_nc} are $\mathcal O({n^{-1}})$ since $\boldsymbol
\mu$ is of dimension $L \times n$ and hence $\boldsymbol {\mu\mu}^
\textrm{H}$ is $\mathcal O({n})$.

\section{Proof of Theorem \ref{theorem}}
\label{app:thrm}

In the asymptotic region of $n\gg1$, $\hat {\boldsymbol R}$ is a
slightly perturbed version of $\boldsymbol R$, described as
\begin{equation}
\label{eq:pertR} \hat {\boldsymbol R} = \boldsymbol R + p
\boldsymbol \Delta
\end{equation}
where $p \ll 1$ is the perturbation factor. Small perturbations in
$\boldsymbol R$ result in small changes in its eigenvectors if the
associated eigenvalues are sufficiently separated  \cite{golub}.
It means that the following results are true for signal
eigenvalues. Remember the definition of the eigendecompositions as
$\boldsymbol {Rv}_i = \lambda_i \boldsymbol v_i$ and
$\hat{\boldsymbol R} \boldsymbol w_i = l_i \boldsymbol w_i$. The
first order perturbation in eigenvectors is
\begin{equation}
\label{eq:pertw} \boldsymbol w_i \simeq \boldsymbol v_i + \sum_{j
\neq i} t_{ij}p  \, \boldsymbol v_j
\end{equation}
where $t_{ij}$s are the perturbation coefficients. Straightforward
calculations will give \cite[eq. (A.9)]{kaveh-barabell}\cite{wilkinson}:
\begin{eqnarray}
\label{eq:pert_li} l_i = \lambda_i + p \, \boldsymbol v_i^
\textrm{H} \! \boldsymbol {\Delta \, v}_i + \sum_{j \neq i}t_{ij}
p^2 \boldsymbol v_i^ \textrm{H} \boldsymbol {\Delta \, v}_j
\\
\label{eq:t1k} t_{ij} = \frac{\boldsymbol v_j^ \textrm{H}
\boldsymbol {\Delta \, v}_i}{\lambda_i - \lambda_j}. \qquad \qquad
\qquad
\end{eqnarray}
Under the conditions of Theorem \ref{theorem}, we will have
\begin{equation}
\label{eq:t_1k_2_det} \textrm{Cov} \big( \, t_{ik} , t_{jr} \big)
= \delta_{ij} \, \delta_{kr} \frac {(\lambda_i + \lambda_k) \,
\sigma^2 - \sigma^4}{ np^2(\lambda_i - \lambda_k)^2} \qquad
\end{equation}
which is shown using \eqref{eq:t1k} and replacing $\boldsymbol
{\mu \mu}^ \textrm{H} = n(\boldsymbol R - \sigma^2 \boldsymbol I)$
in (\ref{eq:bril_nc}). Now, \eqref{eq:E_li_det} is proved using
\eqref{eq:pert_li} and \eqref{eq:bril_nc}. \eqref
{eq:cov_li_lj_det} can be shown using \eqref{eq:pert_li} to the
first order and \eqref{eq:bril_nc}. Note that the limiting
distribution of the eigenvalues is Gaussian \cite{ottersten}.


\begin{thebibliography}{99}


\bibitem{wax}
M. Wax and T. Kailath, ''Detection of signals by information
theoretic criteria,'' \emph{IEEE Trans. Acoustic Speech Signal
Process.}, vol. ASSP-33, pp. 387-392, Apr. 1985.

\bibitem{liavas-ch}
A.P. Liavas, P.A. Regalia, and J.P. Delmas, ``Blind channel approximation: Effective channel order determination," \emph{IEEE Trans. Signal Process.}, vol. 47, pp. 3336-3344, Dec. 1999.

\bibitem{wang}
H. Wang and M. Kaveh, ''On the performance of signal-subspace
processing - part I: narrow-band systems,'' \emph{IEEE Trans.
Acoust. Speech, Signal Process.}, vol. ASSP-34, pp. 1201-1209,
Oct. 1986.

\bibitem{kavehCOR}
M. Kaveh, H. Wang, and H. Hung, ''On the theoretical performance
of a class of estimators of the number of narrow-band sources,''
\emph{IEEE Trans. Acoust. Speech, Signal Process.}, vol. ASSP-35,
pp. 1350-1352, Sep. 1987.

\bibitem{zhang}
Q. Zhang, K. M. Wong, P. C. Yip, and J. P. Reilly, ''Statistical
analysis of the performance of information theoretic criteria in
the detection of the number of signals in array processing,''
\emph{IEEE Trans. Acoustic Speech Signal Process.}, vol. 37, pp.
1557-1567, Oct. 1989.

\bibitem{xu-kaveh}
W. Xu and M. Kaveh, ''Analysis of the performance and sensitivity
of eigendecomposition-based detectors,'' \emph{IEEE Trans. Signal
Process.}, vol. 43, pp. 1413-1426, June 1995.

\bibitem{liavas-perf}
A.P. Liavas, P.A. Regalia, ''On the behavior of information
theoretic criteria for model order selection'' \emph{IEEE Trans.
Signal Process.}, vol. 49, pp. 1689-1695, August 2001.


\bibitem{fishler}
E. Fishler, M. Grossmann, and H. Messer, ''Detection of signals by
information theoretic criteria: general asymptotic performance
analysis,'' \emph{IEEE Trans. Signal Process.}, vol. 50, pp.
1027-1036, May 2002.

\bibitem{ottersten}
B. Ottersten, M. Viberg, and T. Kailath, ''Analysis of subspace
fitting and ML techniques for parameter estimation from sensor
array data'', \emph{IEEE Trans. Signal Process.}, vol. 40, pp.
590-599, March 1992.

\bibitem{brillinger}
D. R. Brillinger, \emph{Time Series: Data Analysis and Theory}.
New York: Holt, Rinehart, and Winston, 1975.

\bibitem{johnstone}
I. M. Johnstone, ''On the distribution of the largest eigenvalue
in principal component analysis,'' \emph{Annals of Statistics},
vol. 29, No. 2, pp. 295-327, 2001.

\bibitem{golub}
G. H. Golub and C. F. Van Loan, \emph{Matrix Computations}, The
Johns Hopkins University Press, 1989.

\bibitem{kaveh-barabell}
M. Kaveh and A. J. Barabell, ``The statistical performance of the MUSIC and the minimum-norm algorithms in resolving plane waves in noise,'' \emph{IEEE Trans.
Acoust. Speech, Signal Process.}, vol. ASSP-34, pp. 331-341, April 1986.

\bibitem{wilkinson}
J. H. Wilkinson, \emph{The Algebraic Eigenvalue Problem}. New York: Oxford University Press, 1965.

\bibitem{lawley}
D. Lawley, ''Tests of significance for the latent roots of
covariance and correlation matrices,'' \emph{Biometrika}. vol. 43,
pp. 128-136, 1956.

\bibitem{wong}
K. M. Wong, Q. Zhang, J. P. Reilly, and P. C. Yip, ''On
information theoretic criteria for determining the number of
signals in high resolution array processing,'' \emph{IEEE Trans.
Acoust. Speech, Signal Process.}, vol. 38, pp. 1959-1971, Nov.
1990.




\end{thebibliography}
\end{document}